\theoremstyle{plain}
\newtheorem{thm}{Theorem}[section]
\newtheorem{lem}[thm]{Lemma}
\newtheorem{cor}[thm]{Corollary}
\crefname{section}{Section}{Sections}
\crefname{figure}{Figure}{Figures}
\crefname{lem}{Lemma}{Lemmas}
\crefname{thm}{Theorem}{Theorems}
\crefname{cor}{Corollary}{Corollaries}
\crefname{obs}{Observation}{Observations}
\crefname{equation}{}{}
\newcommand{\F}{\mathcal{F}}
\renewcommand{\L}{\mathcal{L}}
\newcommand{\Q}{\mathcal{Q}}
\newcommand{\T}{\mathcal{T}}
\renewcommand\ncong{\not{ }\cong}
\newif\ifnotes
  \newcommand\nz[2][]{\todo[backgroundcolor=yellow,linecolor=black,bordercolor=black,caption={},#1]{\raggedright NZ: #2}}
  \newcommand\mj[2][]{\todo[backgroundcolor=cyan,linecolor=black,bordercolor=black,caption={},#1]{\raggedright MJ: #2}}
  \newcommand\nh[2][]{\todo[backgroundcolor=green,linecolor=black,bordercolor=black,caption={},#1]{\raggedright NH: #2}}
  \newcommand\leo[2][]{\todo[backgroundcolor=green,linecolor=black,bordercolor=black,caption={},#1]{\raggedright LEO: #2}}
  \newcommand\nz[2][]{}
  \newcommand\mj[2][]{}
  \newcommand\nh[2][]{}
  \newcommand\leo[2][]{}
\journal{Information Processing Letters}
\begin{document}

\begin{frontmatter}
% ------------------------------------------------------------------------------
\title{A Simple 4-Approximation Algorithm for Maximum Agreement Forests on
Multiple Unrooted Binary Trees\tnoteref{fund1}}
\author[dal]{Jordan Dempsey}
\author[delft]{Leo van Iersel}
\author[delft]{Mark Jones}
\author[dal]{Norbert Zeh}

\affiliation[dal]{organization={Faculty of Computer Science, Dalhousie University},
            addressline={6050 University Avenue}, 
            city={Halifax},
            postcode={B3H 1W5}, 
            state={Nova Scotia},
            country={Canada}}

\affiliation[delft]{organization={Delft Institute of Applied Mathematics, Delft University of Technology},
            addressline={Mekelweg 4}, 
            city={Delft},
            postcode={2628 CD}, 
            country={the Netherlands}}

\tnotetext[fund1]{This paper received
funding from the Netherlands Organisation for Scientific Research (NWO) under
projects OCENW.M.21.306 and OCENW.KLEIN.125, and from the Natural Sciences and Engineering Research Council of Canada under grant RGPIN/05435-2018.}

\date\today
% ------------------------------------------------------------------------------

\begin{abstract}\noindent We present a simple 4-approximation algorithm for
  computing a maximum agreement forest of multiple unrooted binary trees.  This
  algorithm applies LP rounding to an extension of a recent ILP formulation of
  the maximum agreement forest problem on two trees by Van Wersch
  al.~\cite{van}.  We achieve the same approximation ratio as the algorithm of
  Chen et al.~\cite{chen} but our algorithm is extremely simple.  We also prove
  that no algorithm based on the ILP formulation by Van Wersch et al.\ can
  achieve an approximation ratio of $4 - \varepsilon$, for any $\varepsilon >
  0$, even on two trees.  To this end, we prove that the integrality gap of the
  ILP approaches 4 as the size of the two input trees grows.
\end{abstract}

\begin{keyword}
phylogenetic trees 
\sep TBR distance
\sep agreement forests
\sep approximation algorithms
\sep linear programming
\sep LP rounding
\end{keyword}

\end{frontmatter}

% ------------------------------------------------------------------------------
\section{Introduction} 
%-------------------------------------------------------------------------------

Phylogenetic trees (and networks) model the evolution of a set of taxa.
Different methods for constructing such trees from, say, DNA data may produce
different results.  Even the same method may produce different results when
trees on the same set of taxa are constructed from different genes shared by
these taxa.  As a result, it has become important to measure the
dissimilarity between phylogenetic trees, both to quantify confidence in the
trees constructed using different methods and to discover non-tree-like events
in the evolution of a set of taxa that explain the differences between trees
constructed from different genes.

One distance measure used to quantify the dissimilarity between two unrooted
phylogenetic trees is the \emph{tree bisection and reconnection} (TBR) distance
\cite{allensteel}.  A~TBR operation on a tree $T$ removes an edge $\{u,
v\}$ of~$T$, thereby splitting $T$ into two subtrees $T_u$ and $T_v$; supresses
$u$ an $v$, as they now have degree $2$; subdivides an edge in $T_u$ and an edge
in $T_v$; and then reconnects $T_u$ and $T_v$ by adding an edge between the two
vertices introduced by subdividing these two edges.  The TBR distance between
two trees is the number of such TBR operations necessary to turn one of the two
trees into the other. This distance is known to be one less than the
size of a \emph{maximum agreement forest (MAF)} of the two trees
\cite{allensteel}.

An \emph{agreement forest} (AF) of a set of trees $\T$ is a forest that can be
obtained from each tree in $\T$ by deleting edges and suppressing degree-$2$
vertices.  A \emph{maximum agreement forest} (MAF) is an agreement forest with
the minimum number of components (which corresponds to preserving the maximum
number of edges in each tree in $\T$).  While the TBR distance is
difficult to extend to more than two trees, the definition of a MAF does
generalize naturally to more than two trees and is meaningful as a measure of
(dis)similarity of the given set of trees, as it captures the parts of the
evolutionary history of a set of taxa on which all input trees agree.

Computing a MAF is NP-hard even for two trees \cite{allensteel, hein}.  This
motivates the study of parameterized and approximation algorithms for computing
MAFs.  The best known kernel for this problem, due to Kelk et
al.~\cite{kelk2022}, has size $9k - 8$, where $k$ is the size of the MAF.
Hallett and McCartin provided a branching algorithm for the same problem with
running time $O(4^k \cdot k^5 + n^{O(1)})$ \cite{hallett}. Chen et al.\ further
improved this bound to $O(3^k \cdot n)$ \cite{chen2013}.  Van Wersch et al.\
provided a new ILP formulation of the MAF problem, as well as improved
kernelization results that were incorporated into the Tubro software for
computing TBR distance~\cite{van}. 

Whidden and Zeh presented a linear-time $3$-approximation algorithm for
the unrooted MAF problem on two trees \cite{whidden}, which remains the best
known algorithm for this problem.  Chen et al.\ \cite{chen} presented a
$4$-approximation algorithm for multiple binary trees.  Their algorithm is
purely combinatorial but is rather complicated --- significantly more
complicated than the $3$-approximation algorithm for two trees.

In this paper, we show that the ILP formulation of the MAF problem by Van Wersch
et al.\ \cite{van} can be combined with an extremely simple LP rounding
approach to match the approximation ratio achieved by Chen et al.\ on multiple
binary trees.  We also show that no algorithm based on this ILP formulation can
achieve an approximation ratio of $4 - \varepsilon$ for any constant
$\varepsilon > 0$, even when restricted to two trees.  We do this by proving
that the integrality gap of this ILP formulation approaches 4 as the size of the
two input trees grows.

The remainder of this paper is organized as follows: \cref{sec:preliminaries}
provides formal definitions of the concepts used in this paper.  \cref{sec:ilp}
provides an ILP formulation of the problem of computing a MAF of a set of trees.
\cref{sec:rounding} provides a 4-approximation algorithm for computing a MAF of
a set of trees based on this ILP formulation.  \cref{sec:tight-inputs} proves
that the integrality gap of this ILP is $4 - o(1)$, thus precluding a better
approximation ratio than $4$ for any algorithm based on this~ILP.
\cref{sec:conclusions} offers concluding remarks and explains the significance
of our results for related problems.

% ------------------------------------------------------------------------------
\section{Preliminaries}
% ------------------------------------------------------------------------------

\label{sec:preliminaries}

A \textit{(binary) phylogenetic tree} $T$ is a tree whose internal
vertices have degree $3$ and are unlabelled, and whose leaves are
labelled bijectively with the elements of some set~$X$.  We call the elements of
$X$ \emph{taxa} and do not distinguish between a leaf and its label.  All trees
in this paper are unrooted, that is, they are connected undirected graphs
without cycles.

Two phylogenetic trees $T_1$ an $T_2$ over the same leaf set $X$ are
\emph{isomorphic} (written $T_1 \cong T_2$) if there exists a graph isomorphism
$\phi : T_1 \cong T_2$ such that $\phi(x) = x$, for all $x \in X$ (i.e., $\phi$
respects the leaf labels).

For any subset $Y \subseteq X$ we denote by $T[Y]$ the minimal subtree of $T$
that connects all leaves in $Y$.  We use $T|_Y$ to refer to the tree obtained by
suppressing all degree-$2$ vertices in $T[Y]$.  \emph{Suppressing} a degree-$2$
vertex $v$ with neighbours $u$ and $w$ is the operation of removing $v$ and its
incident edges and reconnecting $u$ and $w$ with an edge $\{u, w\}$.

For a set $\T = \{T_1, \ldots, T_t\}$ of phylogenetic trees over the same leaf
set $X$, an \emph{agreement forest} (AF) of $\T$ is a partition $\F = \{Y_1,
\ldots, Y_k\}$ of $X$ such that\footnote{This definition follows Linz and Semple
\cite{linz-cluster}.  Replacing $\F = \{Y_1, \ldots, Y_2\}$ with $\F =
\{T_1|_{Y_1}, \ldots, T_1|_{Y_2}\}$ produces the equivalent definition used by
Allen and Steel \cite{allensteel}, where $\F$ is an actual forest.}
\begin{enumerate}
    \item For all $1 \le i \le j \le t$ and $1 \le h \le k$, $T_i|_{Y_h} \cong
    T_j|_{Y_h}$ and 
    \item For all $1 \le i \le t$ and $1 \le h < h' \le k$, $T_i[Y_h]$ and
    $T_i[Y_{h'}]$ are disjoint subtrees of $T_i$.
\end{enumerate}
This captures the intuitive definition of an AF given in the introduction:
Condition (2) expresses that the trees $T_i[Y_1], \ldots, T_i[Y_t]$ are
separated by edges in~$T_i$, so the set of these trees can be obtained from
$T_i$ by cutting some set of edges in $T_i$.  Condition (1) expresses that we
must obtain the same collection of trees from every tree in $\T$ after
suppressing degree-$2$ vertices.  We call $Y_1, \ldots, Y_k$ the \emph{components}
of $\F$.  We say that two distinct components $Y_h, Y_{h'} \in \F$ \emph{overlap
in $T_i$} if they violate condition (2) for $T_i$ (in this case, $\F$ is not an
AF of $\T$).  We say that $\F$ is a \emph{maximum agreement forest} (MAF) of
$\T$ if there is no AF $\F'$ of $\T$ of size $|\F'| < |\F|$.

A \emph{quartet} is a subset of $X$ of size $4$.  For a quartet $Q = \{a, b, c,
d\}$, let $ab|cd$ be the tree with leaf set $Q$ in which $a$ and $b$ share a
common neighbour $u$, $c$~and $d$ share a common neighbour $v$, and $u$ and $v$
are connected by an edge.  If $T_1|_Q \cong ab|cd$, then we define $\L(Q)$ to
be the set of edges in $T_1[\{a, b\}] \cup T_1[\{c, d\}]$.  A quartet $Q
\subseteq X$ is an \emph{incompatible} quartet of two trees $T_i, T_j \in \T$ if
$T_i|_Q \ncong T_j|_Q$.

% ------------------------------------------------------------------------------
\begin{lem}
  \label{lem:quartet-isomorphic}Two phylogenetic trees $T_1$ and $T_2$ on the same leaf set $X$ are isomorphic if and only if they have no incompatible quartets.
\end{lem}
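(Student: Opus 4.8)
The plan is to prove both directions of the biconditional, with the forward direction being essentially trivial and the reverse direction carrying all the content. For the forward direction, suppose $T_1 \cong T_2$ via a leaf-preserving isomorphism $\phi$. Then for any quartet $Q \subseteq X$, the restriction $\phi$ induces an isomorphism $T_1[Q] \cong T_2[Q]$, and since suppressing degree-$2$ vertices commutes with isomorphism, $T_1|_Q \cong T_2|_Q$. Hence $T_1$ and $T_2$ have no incompatible quartets.

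The substance is in the contrapositive of the reverse direction: if $T_1 \ncong T_2$, I want to exhibit an incompatible quartet. I would argue via the classical fact that an unrooted binary phylogenetic tree is determined by its set of \emph{splits} (bipartitions of $X$ induced by deleting single edges) — equivalently, by the induced quartet topologies. Concretely: if $T_1 \ncong T_2$, there is some split $A \mid B$ of $X$ displayed by one tree, say $T_1$, but not the other. I would then locate four taxa witnessing this discrepancy. Take the edge $e = \{u,v\}$ of $T_1$ inducing $A \mid B$; since internal vertices have degree $3$, $u$ has two further neighbours leading into the $A$-side and $v$ has two further neighbours leading into the $B$-side, so we can pick $a, a' \in A$ on the two sides of $u$ and $b, b' \in B$ on the two sides of $v$, giving $T_1|_{\{a,a',b,b'\}} \cong aa' \mid bb'$. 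Because $A \mid B$ is not a split of $T_2$, in $T_2$ the path structure among these four taxa must realize one of the other two quartet topologies (this is where one uses that $A$ and $B$ each genuinely "straddle" the relevant part of $T_2$ — the standard argument is that some pair from the same side of the split in $T_1$ is separated from the other pair differently in $T_2$). That quartet is then incompatible.

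The main obstacle — and the step I would spend the most care on — is making precise the claim that a split present in $T_1$ but absent from $T_2$ forces a genuinely different quartet topology on four suitably chosen leaves, rather than merely a non-isomorphic but "compatible-looking" restriction. The cleanest way around this is to invoke the Splits-Equivalence / Quartet Theorem in the form: two binary phylogenetic trees on $X$ are isomorphic iff they induce the same quartet topology on every quartet, which is exactly the statement to be proved but is standard (see Semple and Steel, \emph{Phylogenetics}). If a self-contained argument is wanted, I would instead induct on $|X|$: pick a cherry $\{x, y\}$ of $T_1$ (two leaves sharing a neighbour); if $\{x,y\}$ is not a cherry of $T_2$, then $x$ and $y$ are separated in $T_2$ by some internal vertex with a third leaf $z$ reachable, and a fourth leaf $w$ on the far side yields a quartet where $T_1$ gives $xy \mid zw$ but $T_2$ does not; if $\{x,y\}$ is a cherry of both, delete $y$ from both trees, apply the inductive hypothesis to $T_1|_{X \setminus \{y\}}$ and $T_2|_{X \setminus \{y\}}$ to get an incompatible quartet there, and observe it remains an incompatible quartet of $T_1$ and $T_2$ (as restriction to a smaller leaf set commutes with the quartet operation). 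The base case $|X| = 4$ is immediate since a quartet tree is its own unique nontrivial topology.
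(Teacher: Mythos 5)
The paper never proves this lemma: it is stated as a known characterisation and discharged by citing Colonius and Schulze \cite{colonius77,colonius81}, so your proposal is not so much a different route as the only actual proof on the table, and the substance of it is fine. The version you should commit to is the cherry induction, which is sound: the forward direction is immediate; if $\{x,y\}$ is a cherry of $T_1$ but not of $T_2$, take $z$ a leaf in the third subtree at the $T_2$-neighbour of $x$ and $w$ a leaf in the third subtree at the $T_2$-neighbour of $y$ (these neighbours are distinct, so $z \ne w$ and the two subtrees are disjoint), giving $T_1|_Q \cong xy|zw$ but $T_2|_Q \cong xz|yw$; and if $\{x,y\}$ is a cherry of both, non-isomorphism survives restriction to $X \setminus \{y\}$ because each $T_i$ is recovered from $T_i|_{X\setminus\{y\}}$ by re-attaching $y$ beside $x$, and an incompatible quartet of the restrictions is one of $T_1,T_2$ since restriction commutes with taking quartets. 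Your split-based sketch, as you yourself flag, has a real gap as written: for the particular $a,a'$ chosen on the two sides of $u$ and $b,b'$ on the two sides of $v$ in $T_1$, the tree $T_2$ may still display $aa'|bb'$ even though $A|B$ is not one of its splits. The repair is to observe that $A|B$ fails to be a split of $T_2$ precisely when $T_2[A]$ and $T_2[B]$ share an edge $f$, and then to choose $a,a' \in A$ separated by $f$ and $b,b' \in B$ separated by $f$, so that $T_2|_{\{a,a',b,b'\}} \ncong aa'|bb'$ while $T_1|_{\{a,a',b,b'\}} \cong aa'|bb'$ because $A|B$ is a (necessarily nontrivial) split of $T_1$. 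Either repaired route gives a legitimate self-contained replacement for the paper's citation.
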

% ------------------------------------------------------------------------------
\cref{lem:quartet-isomorphic} follows as a result of the work presented by Colonius and 
Schulze who first gave this characterisation of quartet trees in 1977 \cite{colonius77, colonius81}.
% ------------------------------------------------------------------------------
\section{An ILP Formulation of the MAF problem}
% ------------------------------------------------------------------------------

\label{sec:ilp}

Let $\T = \{T_1, \ldots, T_t\}$ be a set of $t$ phylogenetic trees over the same
label set $X$.  For $2 \le i \le t$, let $\Q_i$ be the set of incompatible
quartets of $T_1$ and $T_i$.  Let $\Q = \bigcup_{i=2}^t \Q_i$.  We prove that an
optimal solution to the following ILP defines a MAF of $\T$:

\begin{equation}
  \begin{gathered}
    \textrm{Minimize}\ \sum_{e \in E(T_1)} x_e\\
    \begin{aligned}
      \textrm{s.t.}\ \sum_{e \in \L(Q)} x_e &\ge 1 && \forall Q \in \Q\\
      x_e &\in \{0, 1\} && \forall e \in E(T_1).
    \end{aligned}
  \end{gathered}
  \label{eq:ilp}
\end{equation}

Any solution $\hat x$ of \cref{eq:ilp} defines a set $E_{\hat x} = \{e \in
E(T_1) \mid \hat x_e = 1\}$.  The mapping $\hat x \mapsto E_{\hat x}$ is
easily seen to be a bijection between the set of solutions of \cref{eq:ilp} and
the set of subsets of $E(T_1)$.  Thus, we mostly do not distinguish
between solutions of \cref{eq:ilp} and subsets of $E(T_1)$.

Any subset of edges $E \subseteq E(T_1)$ defines a partition $\F_E = \{Y_1,
\ldots, Y_k\}$ of $X$ where two leaves $a, b \in X$ belong to the same component
$Y_h$ if and only if $T_1[\{a, b\}] \cap E = \emptyset$.

% ------------------------------------------------------------------------------
\begin{thm}
  \label{thm:ilp}
  A subset $E \subseteq E(T_1)$ is a feasible solution of \cref{eq:ilp} if and
  only if $\F_E$ is an agreement forest of\/~$\T$.
\end{thm}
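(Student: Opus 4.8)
The plan is to prove the two implications separately, each time unwinding the definition of an agreement forest condition by condition, with the following translation of feasibility in hand. If $T_1|_Q = ab|cd$ then $\L(Q) = E(T_1[\{a,b\}]) \cup E(T_1[\{c,d\}])$, so $E$ violates the constraint of a quartet $Q$ exactly when $E$ contains no edge on the path from $a$ to $b$ and no edge on the path from $c$ to $d$ in $T_1$; that is, exactly when $a$ and $b$ lie in a common component of $\F_E$ and $c$ and $d$ lie in a common component of $\F_E$. I will also use two facts valid for every $E \subseteq E(T_1)$: each component $Y_h$ of $\F_E$ is the set of leaves of $T_1$ contained in one connected component $C_h$ of $T_1 - E$, with distinct components of $\F_E$ lying in distinct $C_h$, so $T_1[Y_h] \subseteq C_h$ and hence the subtrees $T_1[Y_h]$ are pairwise disjoint in $T_1$. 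This already establishes condition~(2) of the agreement-forest definition for the tree $T_1$, and it moreover forces $T_1|_{\{a_1,a_2,b_1,b_2\}} = a_1a_2|b_1b_2$ whenever $a_1,a_2 \in Y_h$ and $b_1,b_2 \in Y_{h'}$ with $h \ne h'$, since the quartet topology of four leaves equals $a_1a_2|b_1b_2$ precisely when the path from $a_1$ to $a_2$ and the path from $b_1$ to $b_2$ are vertex-disjoint.

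For ``$\F_E$ is an agreement forest $\Rightarrow$ $E$ is feasible'' I argue by contraposition. If $E$ is infeasible, some constraint for a quartet $Q = \{a,b,c,d\} \in \Q_i$ is violated, so with $T_1|_Q = ab|cd$ the leaves $a,b$ share a component $Y_h$ of $\F_E$, the leaves $c,d$ share a component $Y_{h'}$, and $T_i|_Q \in \{ac|bd,\ ad|bc\}$ since $Q$ is incompatible. If $h = h'$, then $Q \subseteq Y_h$ and $(T_1|_{Y_h})|_Q = T_1|_Q \ncong T_i|_Q = (T_i|_{Y_h})|_Q$, so $T_1|_{Y_h} \ncong T_i|_{Y_h}$ by \cref{lem:quartet-isomorphic}, contradicting condition~(1). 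If $h \ne h'$, then in either case the topology of $T_i|_Q$ forces the path from $a$ to $b$ and the path from $c$ to $d$ in $T_i$ to share an edge, so $T_i[Y_h]$ and $T_i[Y_{h'}]$ meet (as they contain $T_i[\{a,b\}]$ and $T_i[\{c,d\}]$ respectively), contradicting condition~(2). Either way $\F_E$ is not an agreement forest.

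For ``$E$ is feasible $\Rightarrow$ $\F_E$ is an agreement forest'', condition~(2) for $T_1$ holds by the remark above. For condition~(1) it suffices, using \cref{lem:quartet-isomorphic} together with symmetry and transitivity of $\cong$, to show that $T_1|_{Y_h}$ and $T_i|_{Y_h}$ have no incompatible quartet for all $i$ and all components $Y_h$; but such a quartet would be a quartet $Q \subseteq Y_h$ with $T_1|_Q \ncong T_i|_Q$, i.e.\ $Q \in \Q_i$, whose four leaves all lie in the single component $Y_h$ of $\F_E$, so its constraint would be violated --- impossible. For condition~(2) with $i \ge 2$, suppose toward a contradiction that $T_i[Y_h]$ and $T_i[Y_{h'}]$ overlap for some $h \ne h'$. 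Since their leaf sets are disjoint, a shared vertex is internal in $T_i$ and is incident to at least two edges of each of the two subtrees, so the subtrees share an edge $e$ of $T_i$. Choosing $a_1,a_2 \in Y_h$ and $b_1,b_2 \in Y_{h'}$ on opposite sides of $e$ gives a quartet $Q' = \{a_1,a_2,b_1,b_2\}$ with $T_i|_{Q'} = a_1b_1|a_2b_2$, the split of $Q'$ induced by $e$, while $T_1|_{Q'} = a_1a_2|b_1b_2$ by the remark above; thus $Q' \in \Q_i$, yet $a_1,a_2$ share the component $Y_h$ and $b_1,b_2$ share the component $Y_{h'}$, so the constraint of $Q'$ is violated --- a contradiction. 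Hence $\F_E$ is an agreement forest.

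The manipulations relating path intersections, quartet topologies, and edge splits, and the bookkeeping around \cref{lem:quartet-isomorphic} and the identification of $\F_E$ with the leaf classes of $T_1 - E$, are routine. The step I expect to require the most care is the structural argument for condition~(2) in the second implication: one must observe that overlapping induced subtrees on disjoint leaf sets necessarily share an edge, and --- more importantly --- that the spanning subtrees $T_1[Y_h]$ and $T_1[Y_{h'}]$ are disjoint \emph{within $T_1$}, which is exactly what pins down the topology of $Q'$ in $T_1$ and makes its constraint collapse. I anticipate that this will be the main obstacle; everything else is a direct unwinding of the definitions.
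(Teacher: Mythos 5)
Your proposal is correct and follows essentially the same route as the paper's proof: both directions are handled via contraposition, with the same case split (a violated quartet lying inside one component versus straddling two), the same appeal to \cref{lem:quartet-isomorphic} for condition~(1), and the same identification of overlap in $T_i$ with two paths sharing an edge. The only difference is that you spell out a few structural facts the paper leaves implicit (that condition~(2) is automatic for $T_1$, and that overlapping induced subtrees on disjoint leaf sets must share an edge), which is a matter of detail rather than of approach.
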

% ------------------------------------------------------------------------------

\begin{proof}
  The proof follows the proof for two trees \cite{van}.

  First assume $\F_E$ is not an AF of $\T$.  Then there exist either a
  tree $T_i \in \T$ and two components $Y_h, Y_{h'} \in \F_E$ that overlap in
  $T_i$, or two trees $T_i, T_j \in \T$ and a component $Y_h \in \F_E$ such that
  $T_i|_{Y_h} \ncong T_j|_{Y_h}$.
  
  In the latter case, we can assume w.l.o.g.\ that $T_1|_{Y_h} \ncong
  T_i|_{Y_h}$ because $T_i|_{Y_h} \ncong T_j|_{Y_h}$ implies that we cannot have
  both $T_1|_{Y_h} \cong T_i|_{Y_h}$ and $T_1|_{Y_h} \cong T_j|_{Y_h}$.  By
  \cref{lem:quartet-isomorphic}, this implies that there exists a quartet $Q
  \subseteq Y_h$ with $T_1|_Q \ncong T_i|_Q$. Since $Q \subseteq Y_h$,
  we have $\L(Q) \cap E = \emptyset$, so $E$ is not a feasible solution of
  \cref{eq:ilp}.
  
  If two components $Y_h, Y_{h'} \in \F_E$ overlap in $T_i$, then there exist
  two leaves $a, b \in Y_h$ and two leaves $c, d \in Y_{h'}$ such that the two
  paths $T_i[\{a, b\}]$ and $T_i[\{c, d\}]$ share an edge.  Thus, for $Q = \{a,
  b, c, d\}$, $T_i|_Q \cong ac|bd$ or $T_i|_Q \cong ad|bc$.  On the other hand,
  $T_1[\{a, b\}] \cap E = \emptyset$ and $T_1[\{c, d\}] \cap E =
  \emptyset$ because $a, b \in Y_h$ and $c, d \in Y_{h'}$, and $T_1[\{x, y\}]
  \cap E \ne \emptyset$, for all $x \in \{a, b\}$ and $y \in \{c, d\}$ because
  $x \in Y_h$, $y \in Y_{h'}$, but $Y_h \ne Y_{h'}$.  This implies that $T_1|_Q
  \cong ab|cd \ncong T_i|_Q$ and that $\L(Q) \cap E = \emptyset$, so once again,
  $E$ is not a feasible solution of \cref{eq:ilp}.
  
  Now assume $E$ is not a feasible solution of~\cref{eq:ilp}.  Then there exists
  a quartet $Q = \{a, b, c, d\} \in \Q$ such that $T_1|_Q \cong ab|cd$
  and $E \cap \L(Q) = \emptyset$.  Assume $Q \in \Q_i$.  Since $E \cap \L(Q) =
  \emptyset$, $a$~and $b$ belong to the same component $Y_h$ of $\F_E$, and $c$
  and $d$ belong to the same component $Y_{h'}$ of $\F_E$.  Since $T_1|_Q \cong
  ab|cd$ and $Q$ is an incompatible quartet of $T_1$ and $T_i$, the paths
  $T_i[\{a, b\}]$ and $T_i[\{c, d\}]$ share an edge.  Thus, if $Y_h \ne Y_{h'}$,
  these two components overlap in $T_i$, and $\F_E$ is not an AF of $\T$.  If
  $Y_h = Y_{h'}$, then $T_i|_Q \cong ac|bd$ or $T_i|_Q \cong ad|bc$.  In either case,
  $T_1|_Q \ncong T_i|_Q$.  Thus, by \cref{lem:quartet-isomorphic},
  $T_1|_{Y_h} \ncong T_i|_{Y_h}$ and, once again, $\F_E$ is not an AF of $\T$.
\end{proof}

It is easily verified that every AF $\F$ of $\T$ satisfies $\F = \F_E$, for some
subset of edges $E \subseteq E(T_1)$, that any such set has size $|E| \ge
|\F| - 1$, and that there exists such a set of size $|E| = |\F| - 1$. Thus,
since $|E_{\hat x}| = \sum_{e \in E(T_1)} \hat x_e$, \cref{thm:ilp} immediately
implies the following corollary.

% ------------------------------------------------------------------------------
\begin{cor}
  \label{cor:main}
  A subset $E \subseteq E(T_1)$ is an optimal solution of \cref{eq:ilp} if and
  only if $\F_E$ is a maximum agreement forest of\/ $\T$ and $|E| = |\F_E|
  - 1$.
\end{cor}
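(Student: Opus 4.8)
The plan is to deduce \cref{cor:main} directly from \cref{thm:ilp} together with the three facts recorded just before the corollary's statement: (i)~every AF $\F$ of $\T$ satisfies $\F = \F_E$ for some $E \subseteq E(T_1)$; (ii)~every such $E$ satisfies $|E| \ge |\F| - 1$; and (iii)~at least one such $E$ satisfies $|E| = |\F| - 1$. Since the objective value of a solution equals $|E_{\hat x}| = \sum_{e \in E(T_1)} \hat x_e$, these facts let us pass freely between small feasible solutions of \cref{eq:ilp} and small agreement forests of $\T$, and the corollary then follows by a short argument in both directions.

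For the forward direction, let $E$ be an optimal solution of \cref{eq:ilp}. By \cref{thm:ilp}, $\F_E$ is an AF of $\T$, and the objective value of $E$ is $|E|$, with $|E| \ge |\F_E| - 1$ by~(ii). First I would show that $\F_E$ is a MAF: if some AF $\F'$ of $\T$ had $|\F'| < |\F_E|$, then by~(iii) there would be an edge set $E'$ with $\F_{E'} = \F'$ and $|E'| = |\F'| - 1 < |\F_E| - 1 \le |E|$; this $E'$ is feasible by \cref{thm:ilp}, contradicting the optimality of $E$. Next I would pin down the objective value: by~(iii) there is an edge set $E''$ with $\F_{E''} = \F_E$ and $|E''| = |\F_E| - 1$, which is feasible by \cref{thm:ilp}, so optimality of $E$ gives $|E| \le |E''| = |\F_E| - 1$; combined with~(ii), this forces $|E| = |\F_E| - 1$.

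For the converse, suppose $\F_E$ is a MAF of $\T$ with $|E| = |\F_E| - 1$. Then $E$ is feasible by \cref{thm:ilp}. For an arbitrary feasible solution $E'$, \cref{thm:ilp} makes $\F_{E'}$ an AF of $\T$, so $|\F_E| \le |\F_{E'}|$ because $\F_E$ is a MAF, whence $|E'| \ge |\F_{E'}| - 1 \ge |\F_E| - 1 = |E|$ by~(ii). Thus no feasible solution has a smaller objective value than $E$, so $E$ is optimal.

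With \cref{thm:ilp} in hand the argument is routine; the only step worth a second look is establishing facts (i)--(iii), which comes down to observing that the pairwise vertex-disjoint subtrees $T_1[Y_1], \dots, T_1[Y_k]$ guaranteed by condition~(2) of the AF definition can be separated from one another inside $T_1$ by deleting exactly $k - 1$ edges, and no fewer (deleting $j$ edges from a tree leaves $j+1$ components, and $\F_E = \F$ forces each component to carry the leaves of exactly one $Y_h$). Since the excerpt labels these as ``easily verified,'' I would simply cite them rather than reprove them.
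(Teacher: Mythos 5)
Your proposal is correct and follows exactly the route the paper intends: the paper gives no explicit proof, asserting only that \cref{thm:ilp} together with the three ``easily verified'' facts immediately implies the corollary, and your argument is precisely the routine deduction being elided. Both directions check out, and deferring the verification of facts (i)--(iii) to the paper's own remark is appropriate.
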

% ------------------------------------------------------------------------------

% ------------------------------------------------------------------------------
\section{A 4-Approximation Based on LP Rounding}
% ------------------------------------------------------------------------------

\label{sec:rounding}

Now consider the LP relaxation of \cref{eq:ilp}, where the constraint $x_e \in
\{0, 1\}$ is replaced with the constraint $x_e \ge 0$.  Let $\tilde x$ be an
optimal fractional solution of this LP relaxation.  By \cref{thm:ilp}, any
integral feasible solution $\hat x$ of \cref{eq:ilp} corresponds to an AF
$\F_{E_{\hat x}}$ of $\T$ of size $|\F_{E_{\hat x}}| \le |E_{\hat x}| + 1
= \sum_{e \in E(T_1)} \hat x_e + 1$.  By \cref{cor:main}, any optimal integral
solution $x^*$ of \cref{eq:ilp} corresponds to a MAF $\F_{E_{x^*}}$ of size
$|\F_{E_{x^*}}| = |E_{x^*}| + 1 = \sum_{e \in E(T_1)} x^*_e + 1 \ge \sum_{e \in
E(T_1)} \tilde x_e + 1$. Thus, if $\sum_{e \in E(T_1)} \hat x_e \le
4\sum_{e \in E(T_1)} \tilde x_e$, then $|\F_{E_{\hat x}}| \le
4|\F_{E_{x^*}}|$, that is, $\F_{E_{\hat x}}$ is a 4-approximation of a MAF of
$\T$.  Such a solution can be obtained as follows.  It is more intuitive to
describe the construction in terms of the edge set $E = E_{\hat x}$.

We compute an optimal fractional solution $\tilde x$ of \cref{eq:ilp}, choose an
arbitrary leaf $r$ of $T_1$ as its root, and initially set $E = \emptyset$.  For
every edge $e$, we name its endpoints $u_e$ and $v_e$ such that $u_e$ is on the
path from $r$ to $v_e$ (i.e., $u_e$ is the parent of $v_e$).  Let $D(e)$ be the
set of descendant edges of $e$ that belong to the same connected component of
$T_1 - E$ as $v_e$.  Formally, $f \in D(e)$ if $v_e$ belongs to the path from
$r$ to $v_f$ and the path from $v_e$ to $v_f$ contains no edge in $E$.  Finally,
let $w(e) = \sum_{f \in D(e)} \tilde x_f$.  Now we choose an edge $e$ such that
\begin{equation}
  w(e) \ge 1/4 \text{ but } w(f) < 1/4 \ \forall f \in D(e) \setminus \{e\}
  \tag{*}
  \label{eq:condition}
\end{equation}
and add this edge to~$E$ (if such an edge exists).  Note that this
changes the values of $D(f)$ and $w(f)$ for every edge $f$ on the path from $r$
to $v_e$.  We continue adding edges that satisfy \cref{eq:condition} to $E$
until every edge $e \in E(T_1) \setminus E$ satisfies $w(e) < 1/4$.  At the end
of the algorithm, we define $D_r$ to be the set of edges $f$ such that the path
from $r$ to $v_f$ contains no edge in $E$.  This set may be empty and is used
only in the analysis of the algorithm.

Next we prove that $|E| \le 4 \sum_{e \in E(T_1)} \tilde x_e$ and that $E$ is a
feasible solution of \cref{eq:ilp}.  As argued above, this implies that
$\F_E$ is a $4$-approximation of a MAF of $\T$.

% ------------------------------------------------------------------------------
\begin{lem}
  \label{lem:approximation-ratio}
  $\displaystyle|E| \le 4 \sum_{e \in E(T_1)} \tilde x_e$.
\end{lem}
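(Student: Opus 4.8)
The plan is to charge the edges in $E$ to the fractional weight $\sum_{e \in E(T_1)} \tilde x_e$ so that each unit of fractional weight is charged at most $4$ times.

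The natural accounting is the following. When we add an edge $e$ to $E$ because it satisfies \cref{eq:condition}, we pay for it using the weight $w(e) = \sum_{f \in D(e)} \tilde x_f \ge 1/4$ of the descendant edges of $e$ lying in the same component of $T_1 - E$ as $v_e$ \emph{at the time $e$ is added}. So first I would make precise that the sets $D(e)$, taken over all edges $e$ added to $E$ at the moment they are added, together with the final set $D_r$, form a partition of $E(T_1)$: every edge $f$ belongs to $D(e)$ for exactly one $e$, namely the first edge added to $E$ on the path from $r$ up through $f$ (or to $D_r$ if no such edge is ever added). The key observation here is that once an edge $e$ is added to $E$, the set $D(e)$ (as it was when $e$ was added) is "frozen": no later edge added to $E$ lies strictly above $v_e$ on the path to $r$ within that component, because all later additions happen in components not containing the edge $e$ — adding $e$ split off the component below $v_e$, and condition \cref{eq:condition} is only ever checked and satisfied for edges $f$ with $w(f) \ge 1/4$, which after the split can only be edges in other components. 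Actually the cleaner statement: once $e \in E$, every edge ever subsequently added lies either in the component of $v_e$ below $e$ or in a component not containing any edge of the old $D(e)$; in the first case it is below some $f \in D(e)$ and gets assigned to that $f$, not re-counted against $e$'s weight. I would verify that the $D(e)$'s at insertion time are pairwise disjoint and, together with $D_r$, exhaust $E(T_1)$.

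Given this partition, the bound is immediate: each edge $e \in E$ is paid for by weight $w(e) \ge 1/4$ drawn from its own block $D(e)$ of the partition, and these blocks are disjoint, so
\begin{equation*}
  \frac{|E|}{4} \le \sum_{e \in E} w(e) = \sum_{e \in E} \sum_{f \in D(e)} \tilde x_f \le \sum_{f \in E(T_1)} \tilde x_f,
\end{equation*}
which rearranges to $|E| \le 4 \sum_{e \in E(T_1)} \tilde x_e$.

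**The main obstacle** will be pinning down the claim that the insertion-time sets $D(e)$ are disjoint and that the partition is respected — i.e., that weight is never double-counted. The subtlety is that $D(e)$ and $w(e)$ change over the course of the algorithm as edges are added above $e$, so one must be careful to fix the right snapshot of $D(e)$ (the one at the moment $e$ enters $E$) and argue that this snapshot is exactly the equivalence class of $e$ under "first $E$-edge encountered going rootward." I expect the formal argument to proceed by induction on the order in which edges are added to $E$, maintaining the invariant that at every point in time, the current sets $\{D(e) : e \in E\} \cup \{D(f) : f \notin E\}$ partition $E(T_1)$ and that $D(e)$ for $e \in E$ has not changed since $e$ was added. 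Everything else — the $\ge 1/4$ lower bound on $w(e)$ at insertion, and the final summation — is routine once this structural fact is in hand.
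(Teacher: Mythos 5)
Your overall strategy is the same as the paper's: charge each edge $e \in E$ at least $1/4$ of fractional weight drawn from the snapshot of $D(e)$ taken at the moment $e$ is inserted, and conclude via pairwise disjointness of these snapshots. You also correctly identify that disjointness is the crux. The gap is that your argument for that crux does not hold together. The danger is precisely that some edge $f$ lying in $D(e)$ when $e$ is added is itself added to $E$ later; then $f$'s insertion-time snapshot is contained in $e$'s, the snapshots are not disjoint, and the fractional weight below $f$ is counted twice in $\sum_{e \in E} w(e)$. Your ``cleaner statement'' explicitly concedes that a later edge may be added ``in the component of $v_e$ below $e$'' and tries to repair this by reassigning it to some $f \in D(e)$ --- but no reassignment can rescue the accounting, because $w(e)$ as computed at $e$'s insertion already includes all of that weight, so the inequality $\sum_{e \in E} w(e) \le \sum_{f \in E(T_1)} \tilde x_f$ genuinely fails if such an addition can occur.

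What is missing is the argument that this case cannot happen at all. By \cref{eq:condition}, every $f \in D(e) \setminus \{e\}$ satisfies $w(f) < 1/4$ at the moment $e$ is added. Adding edges to $E$ can only shrink $D(f)$ and hence only decrease $w(f)$, so $w(f)$ stays below $1/4$ for the rest of the algorithm and $f$ is never selected. Consequently the insertion-time snapshot of $D(e)$ equals its value at termination, and the termination-time sets are trivially pairwise disjoint (each edge has a unique nearest $E$-ancestor, so there is no need for a separate induction or for the set $D_r$ beyond absorbing the leftover edges). This monotonicity of $w$ is the one fact you never state; without it, your first formulation (that edges with $w(f) \ge 1/4$ ``can only be edges in other components'') is an unsupported assertion, and your fallback formulation is false. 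With that observation in place, your concluding chain of inequalities is exactly the paper's.
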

% ------------------------------------------------------------------------------

\begin{proof}
  For every edge $e \in E$, let $D_1(e)$ and $D_2(e)$ be the values of $D(e)$ at
  the time when $e$ is added to $E$ and when the algorithm terminates,
  respectively.  Note that $D_2(e_1) \cap D_2(e_2) = \emptyset$, for any two
  distinct edges $e_1, e_2 \in E$.
  
  For every edge $e \in E$ and every edge $f \in D_2(e) \setminus \{e\}$, the
  path from $v_e$ to $v_f$ contains no edge in $E$ once the algorithm
  terminates.  Thus, this is also true at the time when the algorithm adds $e$
  to~$E$.  This shows that $f \in D_1(e)$, that is, $D_2(e) \subseteq D_1(e)$.

  If $D_1(e) \not\subseteq D_2(e)$, then there exists an edge $f \in D_1(e) \cap E$
  that is added to $E$ after $e$.  We have $w(f) \ge 1/4$ at the time we add $f$
  to $E$ and, since $f \in D_1(e)$, $w(f) < 1/4$ at the time we add $e$ to $E$.
  Adding edges to $E$ cannot increase $w(f)$ for any edge $f \in E(T_1)$.  Thus,
  $w(f)$ never increases.  This is a contradiction, and we must have $D_1(e)
  \subseteq D_2(e)$ and, therefore, $D_1(e) = D_2(e)$.
  
  Since $D_2(e_1) \cap D_2(e_2) = \emptyset$, for any two distinct edges $e_1,
  e_2 \in E$, we also have $D_1(e_1) \cap D_1(e_2) = \emptyset$, for any two
  such edges $e_1$ and $e_2$.  This implies that $\sum_{e \in E} w(e) = \sum_{e
  \in E} \sum_{f \in D_1(e)} \tilde x_f \le \sum_{e \in E(T_1)} \tilde x_e$.

  On the other hand, we have $w(e) \ge 1/4$, for every edge $e \in E$.  Thus,
  $\sum_{e \in E} w(e) \ge |E|/4$, that is, $|E| \le 4 \sum_{e \in E(T_1)}
  \tilde x_e$.
\end{proof}

% ------------------------------------------------------------------------------
\begin{lem}
  \label{lem:feasible-solution}
  $E$ is a feasible solution of \cref{eq:ilp}.
\end{lem}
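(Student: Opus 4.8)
The plan is a proof by contradiction: assuming $E$ is \emph{not} a feasible solution, I will exhibit an edge $e_0 \notin E$ whose weight $w(e_0)$ is at least $1/4$ at the moment the algorithm terminates, contradicting the fact that the algorithm terminates only when $w(f) < 1/4$ for every edge $f \in E(T_1) \setminus E$. Since $E$ corresponds to the $0/1$-solution $\hat x$ with $\hat x_e = 1$ iff $e \in E$, infeasibility of $E$ means there is a quartet $Q = \{a, b, c, d\} \in \Q$, labelled so that $T_1|_Q \cong ab|cd$, with $E \cap \L(Q) = \emptyset$. The optimal fractional solution $\tilde x$ satisfies the corresponding LP constraint, so $\sum_{e \in \L(Q)} \tilde x_e \ge 1$; recall that $\L(Q)$ is the edge set of the two disjoint paths $T_1[\{a, b\}]$ and $T_1[\{c, d\}]$.

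The key step is to locate a heavy \emph{descendant chain} inside $\L(Q)$. Rooting $T_1$ at $r$, each of the paths $T_1[\{a, b\}]$ and $T_1[\{c, d\}]$ splits, at its vertex closest to $r$, into at most two maximal downward paths, so $\L(Q)$ decomposes into at most four descendant chains whose total $\tilde x$-weight is at least $1$; hence some chain $P$ satisfies $\sum_{e \in P} \tilde x_e \ge 1/4$. Because $P \subseteq \L(Q)$, no edge of $P$ lies in $E$. Let $e_0$ be the topmost edge of $P$. Then every edge of $P$ is a descendant edge of $v_{e_0}$, and the path from $v_{e_0}$ down to it runs entirely inside $P$ and so contains no edge of $E$; consequently $P \subseteq D(e_0)$ when the algorithm terminates, so $w(e_0) \ge \sum_{e \in P} \tilde x_e \ge 1/4$ at that point, while $e_0 \notin E$ --- the promised contradiction.

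I expect the only delicate points to be bookkeeping: checking against the definition of $D(\cdot)$ that ``splitting at the topmost vertex'' genuinely produces descendant chains (so that $P \subseteq D(e_0)$), that these at most four chains partition $\L(Q)$, and that at least one of them carries $\tilde x$-weight $\ge 1/4$, which is just averaging against $\sum_{e \in \L(Q)} \tilde x_e \ge 1$. No further machinery seems necessary --- in particular, \cref{thm:ilp} is not used for this direction, only the feasibility of $\tilde x$ for the quartet constraints of \cref{eq:ilp}.
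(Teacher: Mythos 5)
Your proof is correct and is essentially the paper's argument run in the contrapositive direction: both decompose $\L(Q)$ at the lowest common ancestors of $\{a,b\}$ and $\{c,d\}$ into at most four downward chains, and both use the fact that a chain disjoint from $E$ lies in $D(e_0)$ for its topmost edge $e_0$, so that the termination condition $w(e_0) < 1/4$ bounds its weight. The paper states this as ``each leaf-to-LCA path disjoint from $E$ has weight $< 1/4$, hence each of the two paths has weight $< 1/2$,'' while you average to find one chain of weight $\ge 1/4$ and derive a contradiction --- the same bookkeeping either way.
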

% ------------------------------------------------------------------------------

\begin{proof}
  Similar to the proof of \cref{lem:approximation-ratio}, consider the values of
  $D(e)$ and $w(e)$ at the end of the algorithm, for every edge $e \in E(T_1)$.
  Then $w(e) < 1/4$, for every edge $e \notin E$.

  For any two leaves $a, b \in X$, let $u$ be their lowest common ancestor in
  $T_1$, that is, the vertex closest to $r$ that belongs to $T_1[\{a, b\}]$.  Then
  $T_1[\{a, b\}] = T_1[\{a, u\}] \cup T_1[\{u, b\}]$ and $\sum_{e \in T_1[\{a,
  b\}]} \tilde x_e = \sum_{e \in T_1[\{a, u\}]} \tilde x_e + \sum_{e \in
  T_1[\{u, b\}]} \tilde x_e$.  If $a = u$, then $T_1[\{a, u\}]$ contains no
  edges, so $\sum_{e \in T_1[\{a, u\}]} \tilde x_e = 0$.  If $a \ne u$ but
  $T_1[\{a, u\}] \cap E = \emptyset$, then $T_1[\{a, u\}] \subseteq D(e)$, where
  $e$ is the edge in $T_1[\{a, u\}]$ incident to $u$, so $\sum_{e \in T_1[\{a,
  u\}]} \tilde x_e \le w(e) < 1/4$.  This proves that $\sum_{e \in T_1[\{a,
  b\}]} \tilde x_e < 1/2$ for any two leaves $a, b \in X$ such that $T_1[\{a,
  b\}] \cap E = \emptyset$.
  
  For any quartet $Q = \{a, b, c, d\} \in \Q$, we have $\sum_{e \in \L(Q)}
  \tilde x_e \ge 1$ because $\tilde x$ is a feasible fractional solution of
  \cref{eq:ilp}.  Thus, if $T_1|_Q \ncong ab|cd$, then $\sum_{e \in
  T_1[\{a, b\}]} \tilde x_e \ge 1/2$ or $\sum_{e \in T_1[\{c, d\}]} \tilde x_e
  \ge 1/2$.  Therefore, $T_1[\{a, b\}] \cap E \ne \emptyset$ or $T_1[\{a, b\}]
  \cap E \ne \emptyset$, that is, $\L(Q) \cap E \ne \emptyset$.  Since this is
  true for every quartet $Q \in \Q$, $E$ is a feasible solution of
  \cref{eq:ilp}.
\end{proof}

Since the set of incompatible quartets $\Q$ in \cref{eq:ilp} contains at most
$\binom{n}{4} = O\bigl(n^4\bigr)$ quartets, the LP relaxation of \cref{eq:ilp}
has polynomial size.  Thus, it can be solved in polynomial time using the
ellipsoid algorithm \cite{ellipsoid} or any one of a number of more recent
interior point methods.  Together with
\cref{lem:approximation-ratio,lem:feasible-solution}, this shows the following
theorem.

% ------------------------------------------------------------------------------
\begin{thm}
  \label{thm:main}
  There exists a polynomial-time $4$-approximation algorithm for computing the MAF
  of a set of unrooted binary trees based on LP rounding.
\end{thm}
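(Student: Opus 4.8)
The plan is to assemble the pieces already in place: the LP relaxation of \cref{eq:ilp}, the rounding procedure described above, and \cref{lem:approximation-ratio,lem:feasible-solution}. First I would spell out the algorithm explicitly: compute the set $\Q$ of incompatible quartets of $T_1$ and $T_i$ over all $2 \le i \le t$ (polynomial time, since there are at most $\binom{n}{4}$ quartets and testing $T_1|_Q \ncong T_i|_Q$ is easy), form the LP relaxation of \cref{eq:ilp}, solve it to obtain an optimal fractional solution $\tilde x$, run the rounding procedure to obtain the edge set $E$, and output $\F_E$.

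Next I would bound the running time. The LP has one variable per edge of $T_1$ — hence $2n-3$ variables — and $|\Q| = O(n^4)$ constraints, so it has polynomial size and can be solved in polynomial time by the ellipsoid method \cite{ellipsoid} or an interior point method. Each iteration of the rounding loop adds one edge to $E$ and never removes one, so the loop runs for at most $|E(T_1)| = O(n)$ iterations; within an iteration, recomputing $D(\cdot)$ and $w(\cdot)$ and searching for an edge satisfying \cref{eq:condition} takes polynomial time. Hence the whole algorithm is polynomial.

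For correctness, \cref{lem:feasible-solution} says $E$ is a feasible solution of \cref{eq:ilp}, so by \cref{thm:ilp} the partition $\F_E$ is an agreement forest of $\T$, and as noted just before \cref{lem:approximation-ratio} it has size $|\F_E| \le |E| + 1$. \cref{lem:approximation-ratio} gives $|E| \le 4\sum_{e \in E(T_1)} \tilde x_e$. If $\F^*$ is a MAF of $\T$, then by \cref{cor:main} it corresponds to an optimal integral solution $x^*$ with $|\F^*| = \sum_{e \in E(T_1)} x^*_e + 1 \ge \sum_{e \in E(T_1)} \tilde x_e + 1$, since $x^*$ is feasible for the LP relaxation and $\tilde x$ is optimal for it. Chaining these, $|\F_E| \le |E| + 1 \le 4\sum_{e \in E(T_1)} \tilde x_e + 1 \le 4(|\F^*| - 1) + 1 < 4|\F^*|$, so $\F_E$ is a $4$-approximation of a MAF of $\T$.

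There is no real obstacle here, since all the substance lives in \cref{thm:ilp}, \cref{lem:approximation-ratio}, and \cref{lem:feasible-solution}, which are already established. The only points needing mild care are the running-time argument for the rounding loop — specifically, observing that $E$ only ever grows, so the loop halts after $O(n)$ steps — and the observation that, although the constraints of \cref{eq:ilp} range over all incompatible quartets, there are only polynomially many of them, so the LP is of polynomial size.
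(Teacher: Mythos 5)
Your proposal is correct and follows essentially the same route as the paper: the paper's proof of \cref{thm:main} is exactly the combination of the approximation-ratio chain stated at the start of \cref{sec:rounding}, \cref{lem:approximation-ratio,lem:feasible-solution}, and the observation that $|\Q| = O(n^4)$ makes the LP polynomially solvable. Your only additions are minor bookkeeping the paper leaves implicit (the $O(n)$ bound on the number of rounding iterations and the explicit invocation of \cref{cor:main} for the lower bound), which are fine.
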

% ------------------------------------------------------------------------------

% ------------------------------------------------------------------------------
\section{A Family of Tight Inputs}
% ------------------------------------------------------------------------------

\label{sec:tight-inputs}

Next we prove that the integrality gap of \cref{eq:ilp} is $4 - o(1)$ even for
two input trees.  This implies that no approximation algorithm that uses an
optimal fractional solution of \cref{eq:ilp} (or any dual solution; see
\cref{sec:conclusions} for why this is important) as a lower bound on the
optimal solution can achieve an approximation ratio of $4 - \varepsilon$, for any
$\varepsilon > 0$.

% ------------------------------------------------------------------------------
\begin{lem}
  \label{lem:fractional-solution}
  There exists a fractional feasible solution $\tilde x$ of \cref{eq:ilp}
  that satisfies $\sum_{e \in E(T_1)} \tilde x_e = n/4$, where $n = |X|$.
\end{lem}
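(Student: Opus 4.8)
The plan is to write down an explicit fractional solution rather than argue existence abstractly. I would set $\tilde x_e = 1/4$ for every edge $e$ of $T_1$ incident to a leaf (a \emph{pendant edge}) and $\tilde x_e = 0$ for every internal edge of $T_1$. Since $T_1$ is a binary phylogenetic tree on the $n$-element leaf set $X$, it has exactly $n$ pendant edges (each leaf has its own, provided $n \ge 3$; for $n \le 2$ there are no constraints at all), so $\sum_{e \in E(T_1)} \tilde x_e = n/4$ is immediate, and it only remains to verify that $\tilde x$ is a feasible fractional solution of \cref{eq:ilp}.

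The nonnegativity constraints clearly hold, so I would check $\sum_{e \in \L(Q)} \tilde x_e \ge 1$ for each $Q \in \Q$. Write $Q = \{a,b,c,d\}$ with the names chosen so that $T_1|_Q \cong ab|cd$ (this is the labelling for which $\L(Q)$ is defined); then $\L(Q)$ is the edge set of $T_1[\{a,b\}] \cup T_1[\{c,d\}]$. The leaf-to-leaf path $T_1[\{a,b\}]$ begins with the pendant edge at $a$ and ends with the pendant edge at $b$, and $T_1[\{c,d\}]$ likewise contains the pendant edges at $c$ and at $d$. As $a,b,c,d$ are four distinct taxa, these are four distinct pendant edges, all lying in $\L(Q)$ and each assigned value $1/4$ by $\tilde x$; hence $\sum_{e \in \L(Q)} \tilde x_e \ge 4 \cdot (1/4) = 1$, as required.

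I expect no real obstacle here: the argument in fact shows that the same $\tilde x$ satisfies the covering constraint for \emph{every} quartet, not just the incompatible ones, and it does so for an arbitrary collection $\T$ of binary phylogenetic trees on $X$ --- so the specific tight family of trees constructed in this section plays no role in this lemma. The only point needing a word of care is the observation that in a binary phylogenetic tree each leaf genuinely has a private pendant edge. The substantive work of the section lies elsewhere, in the companion bound showing that every maximum agreement forest of the tight family has $n - o(n)$ components, which is what pushes the integrality gap up to $4 - o(1)$ once this lemma supplies the matching fractional objective value $n/4$.
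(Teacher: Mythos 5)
Your proposal is exactly the paper's argument: assign $1/4$ to each pendant edge and $0$ elsewhere, then note that $\L(Q)$ contains the four distinct pendant edges of the quartet's taxa, so every constraint is satisfied with value at least $1$. The proof is correct and matches the paper's, down to the observation that the choice works for all quartets, not just incompatible ones.
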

% ------------------------------------------------------------------------------

\begin{proof}
  For every leaf $v \in X$, let $e_v$ be the unique edge incident to $v$.  Now
  consider the fractional solution $\tilde x$ that sets $\tilde x_{e_v} = 1/4$
  for all $v \in X$, and $\tilde x_e = 0$ for any other edge.  Clearly, $\sum_{e
  \in E(T_1)} \tilde x_e = n/4$.

  To see that $\tilde x$ is feasible, observe that for every quartet $Q = \{a,
  b, c, d\} \in \Q$, we have $\{e_a, e_b, e_c, e_d\} \subseteq \L(Q)$.  Since
  $\tilde x_{e_v} = 1/4$ for every leaf $v \in X$, this shows that $\sum_{e \in
  \L(Q)} \tilde x_e \ge \tilde x_{e_a} + \tilde x_{e_b} + \tilde x_{e_c} +
  \tilde x_{e_d} = 1$.  Since this is true for every quartet $Q \in \Q$, $\tilde
  x$ is feasible.
\end{proof}

% ------------------------------------------------------------------------------
\begin{lem}
  \label{lem:integral-solution}
  There exists an infinite family of pairs of trees such that any agreement
  forest of any pair $(T_1, T_2)$ in this family has $n - o(n)$ components,
  where $n$ is the size of the label set $X$ of $T_1$ and $T_2$.
\end{lem}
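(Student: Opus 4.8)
The plan is to use two caterpillars whose leaf orders are ``maximally misaligned'' in a number-theoretic sense. For each $n$, fix an integer $c$ with $\sqrt n\le c\le\sqrt{2n}$ and $\gcd(c,n)=1$ (such a $c$ exists for every $n$), let $\sigma$ be the identity permutation of $X=\{1,\dots,n\}$, and let $\tau$ be the permutation with $\tau^{-1}(i)\equiv c\,i\pmod n$. Take $T_1$ to be the caterpillar whose leaves read $\sigma(1),\dots,\sigma(n)$ along its backbone and $T_2$ the caterpillar with leaf order $\tau(1),\dots,\tau(n)$. For any agreement forest $\F$ of $(T_1,T_2)$ with components $Y_1,\dots,Y_k$ we have $n-k=\sum_h(|Y_h|-1)$, so it suffices to prove that $\sum_{|Y_h|\ge 2}(|Y_h|-1)=O(\sqrt n)$ for every such $\F$; then $k\ge n-O(\sqrt n)=n-o(n)$.

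First I would establish a structural lemma about agreement-forest components in a caterpillar $C_\rho$ (backbone order $\rho(1),\dots,\rho(n)$): for a component $Y$, write $P_\rho(Y)=\{\rho^{-1}(y):y\in Y\}$ for the set of backbone positions of its leaves and $\mathrm{span}_\rho(Y)=\max P_\rho(Y)-\min P_\rho(Y)$. The claim is that if $|Y|\ge 2$ then every position strictly between $\min P_\rho(Y)$ and $\max P_\rho(Y)$ that does not lie in $P_\rho(Y)$ is the position of a leaf that forms a \emph{singleton} component of $\F$: the minimal subtree $C_\rho[Y]$ occupies a contiguous stretch of the backbone, a leaf hanging off an internal backbone vertex of that stretch can only reach the rest of the tree through that vertex, and so --- if it belonged to a component of size at least $2$ --- that component's subtree would share a vertex with $C_\rho[Y]$, contradicting condition~(2). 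It follows that the intervals $[\min P_\rho(Y_h),\max P_\rho(Y_h)]$ over the non-singleton components are pairwise disjoint, hence $\sum_h\mathrm{span}_\rho(Y_h)\le n$, and this holds for $\rho=\sigma$ and for $\rho=\tau$ alike. The second ingredient is condition~(1): two caterpillars on the same leaf set are isomorphic only if their backbone orders agree up to reversal, so $T_1|_{Y}\cong T_2|_{Y}$ forces the elements of $Y$, listed in increasing order, to be mapped monotonically by $i\mapsto c\,i\bmod n$.

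The crux is a packing estimate combining the two facts. Fix a non-singleton component $Y=\{y_1<\dots<y_s\}$, set $\delta_i=y_{i+1}-y_i\ge 1$ (so $\mathrm{span}_\sigma(Y)=\sum_{i=1}^{s-1}\delta_i$), and assume $Y$ is mapped increasingly by $i\mapsto c\,i\bmod n$ (the decreasing case is symmetric and only stronger). Then the consecutive gaps of the image $\{c\,y_i\bmod n\}$ are exactly the residues $g_i=(c\,\delta_i)\bmod n>0$, and $\mathrm{span}_\tau(Y)=\sum_{i=1}^{s-1}g_i$. The elementary point is that $\delta_i+g_i\ge c/3$ for every $i$: when $\delta_i$ is at least a small constant multiple of $c$ this is immediate, and when $\delta_i$ is smaller the choice $c\le\sqrt{2n}$ guarantees $c\,\delta_i<n$, so no reduction modulo $n$ occurs and $g_i=c\,\delta_i\ge c$. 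Summing over $i$ gives $\mathrm{span}_\sigma(Y)+\mathrm{span}_\tau(Y)\ge (s-1)\,c/3$, and combining with the two disjointness bounds,
\[
 \tfrac{c}{3}\sum_{|Y_h|\ge 2}(|Y_h|-1)\ \le\ \sum_h\bigl(\mathrm{span}_\sigma(Y_h)+\mathrm{span}_\tau(Y_h)\bigr)\ \le\ 2n ,
\]
so $n-k=\sum_h(|Y_h|-1)\le 6n/c\le 6\sqrt n=O(\sqrt n)$, which proves $k=n-o(n)$.

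The step that I expect to need the most care is the structural lemma --- making precise which internal backbone vertices lie in $C_\rho[Y]$, and in particular handling the two end-cherries of each caterpillar, so that ``every intermediate position is a singleton'' is exactly correct rather than off by a constant. Everything after that is the short modular computation above; the one genuinely global design decision is picking $c$ of order $\sqrt n$, which is what ensures a small gap in one backbone order is stretched by a factor of exactly $c$ (with the modular reduction never interfering) in the other.
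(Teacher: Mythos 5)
Your family and the paper's are closer than they may look: the paper takes an $\ell\times\ell$ grid of labels listed row\mbox{-}by\mbox{-}row in $T_1$ and column\mbox{-}by\mbox{-}column in $T_2$, and transposing the grid is (up to one label) exactly multiplication by $\ell\approx\sqrt n$ modulo $\ell^2-1$ --- i.e.\ your construction with $c$ of order $\sqrt n$. The counting mechanisms differ, though. The paper marks $\ell-1$ ``separating'' edges in each tree, observes that the sets of separating edges met by distinct components are disjoint (your interval-disjointness, in edge form), and then \emph{injects} the $|Y_h|-1$ consecutive-leaf gaps of each component into its separating edges: a $T_1$ separating edge when the row index increases, a $T_2$ separating edge otherwise, with distinctness of the latter enforced by exhibiting a single incompatible quartet inside $Y_h$. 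Your span-plus-modular-gap computation replaces that injection, and the core estimate ($\delta_i+g_i\ge c/3$ whenever no reduction mod $n$ occurs, guaranteed by $c\le\sqrt{2n}$) is sound.

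The gap is exactly where you predicted trouble, but it is in the isomorphism step rather than the structural lemma. The identity $\mathrm{span}_\tau(Y)=\sum_i g_i$ with $g_i=(c\,\delta_i)\bmod n$ requires the images $c\,y_1,\dots,c\,y_s\bmod n$ to be monotone in $i$, and condition (1) does not give you that: for $|Y|=3$ it is vacuous (all trees on three leaves are isomorphic, so the three images may appear in any of the six orders), and for $|Y|\ge 4$ the two leaf orders agree only up to reversal \emph{and swapping the two leaves in each end cherry}, which perturbs the first and last gaps. You cannot afford to discard even a constant number of gaps per component, since in the extreme case where all non-singleton components have size $2$ or $3$ the quantity $\sum_h(|Y_h|-1)$ is itself proportional to the number of components. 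The patch is to work with spans directly in these cases: for two leaves at $T_1$-distance $\delta<c/3$ the distance between their $T_2$-positions is $c\delta$ or $n-c\delta$, and both are at least $c/3$ given $\sqrt n\le c\le\sqrt{2n}$; a short case analysis (telescoping, e.g., $q_3-q_2$ in place of $(q_1-q_2)+(q_3-q_1)$) then handles size-$3$ components and the cherry-swapped end gaps. With that patch --- and taking $n$ prime so that a suitable coprime $c$ trivially exists, which suffices for an infinite family --- your argument goes through; the paper's separating-edge injection simply sidesteps all of these edge cases.
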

% ------------------------------------------------------------------------------

\begin{proof}
  Let $\ell \ge 4$.  Consider two trees $T_1$ and $T_2$ with leaf set $X =
  \{(i,j) \mid 1 \le i, j \le \ell\}$.  Thus, $n = |X| = \ell^2$, that is, $\ell
  = \sqrt{n}$.  Both $T_1$ and $T_2$ are caterpillars, that is, the internal
  vertices of each tree form a path, which we call the \emph{spine} of the
  caterpillar.  In $T_1$, the leaves are attached to this path sorted by their
  $i$-components and then by their $j$-components.  In~$T_2$, they are sorted by
  their $j$-components and then by their $i$-components.  For an example, see
  \cref{fig:trees}.

  % ----------------------------------------------------------------------------
  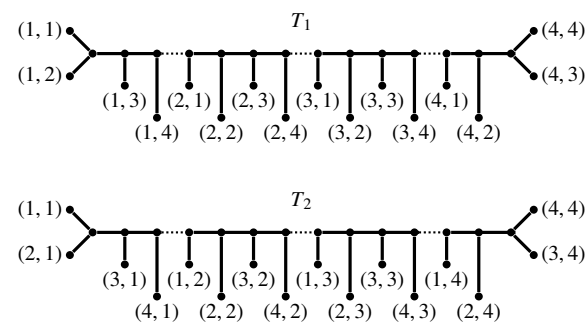
\begin{figure}[b]
    \centering
    \begin{tikzpicture}[
      x=4.25mm,y=4.25mm,
      vertex/.style={inner sep=0pt,circle,fill=black,minimum size=3pt},
      edge/.style={draw,very thick},
      sep/.style={draw,densely dotted,thick},
    ]
      \footnotesize
      \path node [vertex] (2) {}
      foreach \i in {3,...,15} {
        ++(0:1) node [vertex] (\i) {}
      }
      foreach \i in {3,5,...,14} {
        (\i) +(270:1) node [vertex] (x\i) {}
      }
      foreach \i in {4,6,...,14} {
        (\i) +(270:2) node [vertex] (x\i) {}
      }
      (2) +(135:1) node [vertex] (x1) {}
      +(225:1) node [vertex] (x2) {}
      (15) +(45:1) node [vertex] (x16) {}
      +(315:1) node [vertex] (x15) {};
      \path [edge]
      foreach \i [evaluate=\i as \j using \i+1] in {2,3,5,6,7,9,10,11,13,14} {
        (\i) -- (\j)
      }
      foreach \i in {3,...,14} {
        (\i) -- (x\i)
      }
      (x1) -- (2) -- (x2)
      (x15) -- (15) -- (x16);
      \path [sep]
      foreach \i [evaluate=\i as \j using \i+1] in {4,8,12} {
        (\i) -- (\j)
      };
      \path
      (x1) node [anchor=east] {$(1,1)$}
      (x2) node [anchor=east] {$(1,2)$}
      (x3) node [anchor=north] {$(1,3)$}
      (x4) node [anchor=north] {$(1,4)$}
      (x5) node [anchor=north] {$(2,1)$}
      (x6) node [anchor=north] {$(2,2)$}
      (x7) node [anchor=north] {$(2,3)$}
      (x8) node [anchor=north] {$(2,4)$}
      (x9) node [anchor=north] {$(3,1)$}
      (x10) node [anchor=north] {$(3,2)$}
      (x11) node [anchor=north] {$(3,3)$}
      (x12) node [anchor=north] {$(3,4)$}
      (x13) node [anchor=north] {$(4,1)$}
      (x14) node [anchor=north] {$(4,2)$}
      (x15) node [anchor=west] {$(4,3)$}
      (x16) node [anchor=west] {$(4,4)$};
      \node [anchor=south,yshift=6pt] at (barycentric cs:8=0.5,9=0.5) {$T_1$};
    \end{tikzpicture}\\[\bigskipamount]
    \begin{tikzpicture}[
      x=4.25mm,y=4.25mm,
      vertex/.style={inner sep=0pt,circle,fill=black,minimum size=3pt},
      edge/.style={draw,very thick},
      sep/.style={draw,densely dotted,thick},
    ]
      \footnotesize
      \path node [vertex] (2) {}
      foreach \i in {3,...,15} {
        ++(0:1) node [vertex] (\i) {}
      }
      foreach \i in {3,5,...,14} {
        (\i) +(270:1) node [vertex] (x\i) {}
      }
      foreach \i in {4,6,...,14} {
        (\i) +(270:2) node [vertex] (x\i) {}
      }
      (2) +(135:1) node [vertex] (x1) {}
      +(225:1) node [vertex] (x2) {}
      (15) +(45:1) node [vertex] (x16) {}
      +(315:1) node [vertex] (x15) {};
      \path [edge]
      foreach \i [evaluate=\i as \j using \i+1] in {2,3,5,6,7,9,10,11,13,14} {
        (\i) -- (\j)
      }
      foreach \i in {3,...,14} {
        (\i) -- (x\i)
      }
      (x1) -- (2) -- (x2)
      (x15) -- (15) -- (x16);
      \path [sep]
      foreach \i [evaluate=\i as \j using \i+1] in {4,8,12} {
        (\i) -- (\j)
      };
      \path
      (x1) node [anchor=east] {$(1,1)$}
      (x2) node [anchor=east] {$(2,1)$}
      (x3) node [anchor=north] {$(3,1)$}
      (x4) node [anchor=north] {$(4,1)$}
      (x5) node [anchor=north] {$(1,2)$}
      (x6) node [anchor=north] {$(2,2)$}
      (x7) node [anchor=north] {$(3,2)$}
      (x8) node [anchor=north] {$(4,2)$}
      (x9) node [anchor=north] {$(1,3)$}
      (x10) node [anchor=north] {$(2,3)$}
      (x11) node [anchor=north] {$(3,3)$}
      (x12) node [anchor=north] {$(4,3)$}
      (x13) node [anchor=north] {$(1,4)$}
      (x14) node [anchor=north] {$(2,4)$}
      (x15) node [anchor=west] {$(3,4)$}
      (x16) node [anchor=west] {$(4,4)$};
      \node [anchor=south,yshift=6pt] at (barycentric cs:8=0.5,9=0.5) {$T_2$};
    \end{tikzpicture}
    \caption{The two trees $T_1$ and $T_2$ in the proof of
    \cref{lem:integral-solution} for $\ell = 4$.}
    \label{fig:trees}
  \end{figure}
  % ----------------------------------------------------------------------------

  Now we call an edge $e \in E(T_1)$ \emph{separating} if there exists an index
  $1 \le i < \ell$ such that $e$ belongs to the path from $(i,j)$ to $(i+1,j')$,
  for all $1 \le j, j' \le \ell$.  Similarly, we call an edge $e \in E(T_2)$
  separating if there exists an index $1 \le j < \ell$ such that $e$ belongs to
  the path from $(i,j)$ to $(i',j+1)$, for all $1 \le i, i' \le \ell$.  The
  separating edges are shown dotted in \cref{fig:trees}.  Note that there are
  $\ell - 1$ separating edges in $T_1$, and $\ell - 1$ separating edges in
  $T_2$, $2\ell - 2$ separating edges in total.

  Now assume that $\F = \{Y_1, \ldots, Y_k\}$ is an AF of $T_1$ and $T_2$.  For
  each component $Y_h$, let $S_{Y_h}^1$ be the set of separating edges of $T_1$
  contained in $T_1[Y_h]$, let $S_{Y_h}^2$ be the set of separating edges of
  $T_2$ contained in $T_2[Y_h]$, and let $S_{Y_h} = S_{Y_h}^1 \cup S_{Y_h}^2$.
  Since no two components of $\F$ overlap in either $T_1$ or $T_2$, we have
  $S_{Y_h} \cap S_{Y_{h'}} = \emptyset$, for any two distinct components $Y_h,
  Y_{h'} \in \F$, that is, $\sum_{h=1}^k |S_{Y_h}| \le 2\ell - 2$.  Next we
  prove that $|Y_h| \le |S_{Y_h}| + 1$, for all $Y_h \in \F$.  This implies that
  $n = \sum_{h=1}^k |Y_h| \le \sum_{h=1}^k (|S_{Y_h}| + 1) = \sum_{h=1}^k
  |S_{Y_h}| + k \le 2\ell - 2 + k$, so $k \ge n - 2\ell + 2 = n - 2\sqrt{n} + 2
  = n - o(n)$, and the lemma follows.

  Consider a component $Y_h = \{x_1, \ldots, x_s\}$ of $\F$; let $x_r = (i_r,
  j_r)$, for all $1 \le r \le s$; and assume that the leaves in $Y_h$ are
  indexed in the order in which they occur along $T_1$.  For each index
  $1 \le r < s$, we choose a separating edge $e_r$ of $T_1$ or $T_2$ such that
  $e_r \in S_{Y_h}$ and, for all $1 \le r_1 < r_2 < s$, $e_{r_1} \ne e_{r_2}$.
  This immediately implies that $|Y_h| = s \le |S_{y_h}| + 1$.
  
  If $i_r < i_{r+1}$, then we choose $e_r$ to be the $i_r$th separating edge in
  $T_1$.  Otherwise, we must have $i_r = i_{r+1}$ and $j_r < j_{r+1}$, and we
  choose $e_r$ to be the $j_r$th separating edge in $T_2$.  Since the leaves in
  $Y_h$ are indexed in the order in which they occur along $T_1$, we have
  $i_{r_1} \ne i_{r_2}$ for any two indices $r_1 \ne r_2$ such that $i_{r_1} <
  i_{r_1 + 1}$ and $i_{r_2} < i_{r_2 + 1}$.  Thus, the separating edges chosen
  from $T_1$ are all distinct.

  Next assume that there exist two indices $r_1 < r_2$ such that $e_{r_1}$ and
  $e_{r_2}$ are the same separating edge from $T_2$.  Then $i_{r_1} = i_{r_1 +
  1}$, $j_{r_1} < j_{r_1+1}$, $i_{r_2} = i_{r_2 + 1}$, $j_{r_2} < j_{r_2 + 1}$,
  and $j_{r_1} = j_{r_2}$.  Since $j_{r_1} = j_{r_2}$, we must have $i_{r_1} \ne
  i_{r_2}$ and, therefore, $i_{r_1+1} \ne i_{r_2+1}$.  Thus, $Q = \{x_{r_1},
  x_{r_1+1}, x_{r_2}, x_{r_2+1}\} \subseteq Y_h$ is a quartet that satisfies
  $T_1|_Q \cong x_{r_1}x_{r_1+1}|x_{r_2}x_{r_2+1}$ and $T_2|_Q \cong
  x_{r_1}x_{r_2}|x_{r_1+1}x_{r_2+1}$, a contradiction because $\F$ is an AF of
  $T_1$ and $T_2$ and $Y_h$ is a component of $\F$.  Thus, all separating edges
  chosen from $T_2$ are also distinct.  This finishes the proof.
\end{proof}

\cref{lem:fractional-solution,lem:integral-solution} together prove the
following theorem.

% ------------------------------------------------------------------------------
\begin{thm}
  \label{thm:integrality-gap}
  The integrality gap of \cref{eq:ilp} is at least $4 - o(1)$ even if $|\T| =
  2$.
\end{thm}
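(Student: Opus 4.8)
The plan is to obtain \cref{thm:integrality-gap} as an immediate consequence of the two preceding lemmas together with \cref{cor:main}. Fix the infinite family of pairs of trees $(T_1, T_2)$ constructed in \cref{lem:integral-solution}, and consider one such pair with label set $X$ of size $n = \ell^2$, where $\ell \ge 4$. For this fixed pair I would bound the optimum of the LP relaxation of \cref{eq:ilp} from above, bound the optimum of the ILP \cref{eq:ilp} from below, and then divide.

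For the upper bound on the LP optimum I would simply invoke \cref{lem:fractional-solution}, which produces, for \emph{any} pair of trees and hence in particular for the pair at hand, a feasible fractional solution $\tilde x$ with $\sum_{e \in E(T_1)} \tilde x_e = n/4$. Consequently the value of an optimal fractional solution is at most $n/4$. For the lower bound on the ILP optimum I would use \cref{cor:main}: an optimal solution $E$ of \cref{eq:ilp} satisfies $|E| = |\F_E| - 1$, where $\F_E$ is a maximum agreement forest of $\{T_1, T_2\}$. By \cref{lem:integral-solution}, \emph{every} agreement forest of this pair — in particular the MAF $\F_E$ — has at least $n - o(n)$ components, so the ILP optimum equals $|E| \ge n - o(n) - 1 = n - o(n)$.

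Combining the two bounds, the ratio of the ILP optimum to the LP optimum for this pair is at least $\dfrac{n - o(n)}{n/4} = 4\bigl(1 - o(1)\bigr) = 4 - o(1)$. Since the family is infinite and $n = \ell^2$ can be made arbitrarily large, this exhibits a sequence of two-tree instances on which this ratio tends to $4$, which is precisely the assertion that the integrality gap of \cref{eq:ilp} is at least $4 - o(1)$ even when $|\T| = 2$.

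I do not anticipate a genuine obstacle here, since the substantive work is carried out in \cref{lem:fractional-solution,lem:integral-solution}; the theorem is essentially their corollary. The only points worth stating carefully are (i) that the ratio is well defined, i.e.\ the LP optimum is positive — which is clear because it is at most $n/4$ and at least $1$, the latter since $T_1$ and $T_2$ are non-isomorphic caterpillars and hence $\Q \ne \emptyset$ — and (ii) that the notation ``integrality gap $\ge 4 - o(1)$'' should be read as ``for every $\varepsilon > 0$ there is an instance on which the ratio exceeds $4 - \varepsilon$,'' which is exactly what the displayed estimate delivers as $\ell \to \infty$.
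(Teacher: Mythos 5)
Your proposal is correct and follows essentially the same route as the paper: upper-bound the LP optimum by $n/4$ via \cref{lem:fractional-solution}, lower-bound the ILP optimum by $n - o(n)$ via \cref{lem:integral-solution} together with \cref{cor:main}, and take the ratio. The extra remarks on well-definedness of the ratio and the reading of the $o(1)$ notation are fine but not needed beyond what the paper states.
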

% ------------------------------------------------------------------------------

\begin{proof}
  Consider any pair in the family of tree pairs provided by
  \cref{lem:integral-solution}.  The optimal integral solution of \cref{eq:ilp}
  for this pair of trees has objective function value $n - o(n)$.  By
  \cref{lem:fractional-solution}, the optimal fractional solution has objective
  function value at most $n/4$.  Thus, the integrality gap is at least $\frac{n
  - o(n)}{n/4} = 4 - o(1)$.
\end{proof}

% ------------------------------------------------------------------------------
\section{Conclusions}
% ------------------------------------------------------------------------------

\label{sec:conclusions}

\cref{thm:main,thm:integrality-gap} are significant for at least three reasons.

First, \cref{thm:main} matches the approximation ratio of the significantly
more complex 4-approximation algorithm of \cite{chen}, which is based on
purely combinatorial arguments though.

Second, in \cite{dmp}, \cref{eq:ilp} and the ILP version of its dual were used
to prove that there exists a kernel of size $O(k \lg k)$ for the 2-state maximum
parsimony distance of two trees.  This dual is

\begin{equation}
  \begin{gathered}
    \textrm{Maximize}\ \sum_{Q \in \Q} y_Q\\
    \begin{aligned}
      \textrm{s.t.}\ \sum_{e \in \L(Q)} y_Q &\le 1 && \forall e \in E(T_1)\\
      y_Q &\in \{0, 1\} && \forall Q \in \Q.
    \end{aligned}
  \end{gathered}
  \label{eq:dual}
\end{equation}

The key to bounding the size of the kernel was to prove that the gap between
optimal integral solutions of \cref{eq:ilp,eq:dual} is at most $O(\lg k)$.  If
both \cref{eq:ilp,eq:dual} have a constant integrality gap, then this proves
that the kernel in \cite{dmp} is in fact a linear kernel for the 2-state maximum
parsimony distance.  \cref{thm:main} proves the first half of this conjecture.

Finally, our hope was to use primal-dual arguments based on
\cref{eq:ilp,eq:dual} to obtain a 2-approximation algorithm, or at least a
$c$-approximation algorithm with $c < 3$, for the TBR distance of two unrooted
binary trees.  \cref{thm:integrality-gap} proves that this is impossible.  Thus,
if there exists a 2-approximation algorithm for the TBR distance of two unrooted
binary trees, it needs to be based on a different ILP formulation, possibly one
mimicking the ILP in \cite{red-blue}, which was used to prove that the algorithm
in \cite{red-blue} outputs a 2-approximation of a MAF for two \emph{rooted}
trees.

\bibliographystyle{abbrv}
\bibliography{ref}

\end{document}